\documentclass[envcountsame]{llncs}
\usepackage[colorlinks]{hyperref}

\usepackage{version}
\usepackage{wrapfig}
\includeversion{report}\excludeversion{conference}
%\excludeversion{report}\includeversion{conference}

\usepackage{fancybox}
\usepackage{amssymb}
\usepackage{amsfonts}
\usepackage{amsmath}
\usepackage{scalefnt}
\usepackage{url}
\usepackage{verbatim}
\usepackage{color}
\usepackage{subfig}
\usepackage{stmaryrd}
\usepackage{tikz-qtree}%
\usepackage{colortbl}

\usepackage{xspace}
\usepackage[latin1]{inputenc}

\usepackage[ruled,vlined,linesnumbered]{algorithm2e}

\usetikzlibrary{patterns}

\newcommand{\tuple}[1]{\langle#1\rangle}

\newcommand{\setof}[2]{\left\{#1\,\middle|\:#2\right\}}
 \newcommand{\gpid}{\textsc{GPiD}\xspace}

 \newcommand{\Ilinva}{\textsc{Ilinva}\xspace}
 \newcommand{\Propagate}{\textsc{Ind}\xspace}
 
 \newcommand{\true}{\top}
 \newcommand{\interv}[2]{\llbracket#1\mathrel{{.}\,{.}}\nobreak#2\rrbracket}
 
 \newcommand{\modelst}{\models_{\theory}}
 \newcommand{\theory}{{\cal T}}
 \newcommand{\whyml}{\textsc{WhyML}\xspace}
 
\newcommand{\zthree}{\textsc{Z3}\xspace}
\newcommand{\smtlib}{\textsc{SMTlib2}\xspace}
\newcommand{\abdulot}{\textsc{Abdulot}\xspace}
\newcommand{\ilinva}{\texttt{Ilinva}\xspace}
\newcommand{\cvcf}{\textsc{CVC4}\xspace}
\newcommand{\whyt}{\texttt{Why3}\xspace}
 \newcommand{\why}{\whyt}
 
\newcommand{\github}{\textsc{GitHub}\xspace}
\newcommand{\ccode}{\textsc{C}\xspace}
\newcommand{\java}{\textsc{Java}\xspace}

\newcommand{\minisat}{\textsc{MiniSAT}\xspace}
\newcommand{\altergo}{\textsc{AltErgo}\xspace}
\newcommand{\cpp}{\textsc{C++}\xspace}
\newcommand{\ie}{\emph{i.e.}\xspace}
\newcommand{\eg}{\emph{e.g.}\xspace}

\newcommand{\rtnosol}[1]{\cellcolor{lightgray} $#1$}
\newcommand{\rtasol}[1]{$#1$}
\newcommand{\rtto}[1]{\cellcolor{gray} $#1$}
\newcommand{\rtnoth}[1]{\cellcolor{black} $#1$}
\newcommand{\compact}[1]{{\small #1}}

\renewcommand*{\thefootnote}{\fnsymbol{footnote}}

\newcommand{\todo}[1]{\textbf{\color{red}TODO: #1}}
\newcommand{\crrm}[1]{}

\begin{document}

\title{\Ilinva: Using Abduction to Generate Loop Invariants}
\author{Mnacho Echenim \and Nicolas Peltier \and Yanis Sellami}
\institute{Univ. Grenoble Alpes, CNRS, LIG, F-38000 Grenoble France \email{[Mnacho.Echenim|Nicolas.Peltier|Yanis.Sellami]@univ-grenoble-alpes.fr}}

\maketitle

\begin{abstract}
We describe a system to prove properties of programs. 
The key feature of this approach is a method to automatically synthesize inductive invariants of the loops contained in the program. The method is generic, \ie, it applies to a large set of programming languages and application domains; and lazy, in the sense that 
it only generates invariants that allow one to derive the required properties.
It relies on an existing system called \gpid for abductive reasoning modulo theories  \cite{EPS18}, and on the platform for program verification \why \cite{filliatre13esop}.
Experiments show evidence of the practical relevance of our approach.
\end{abstract}

\renewcommand*{\thefootnote}{\arabic{footnote}}

\newcommand{\aprog}{{\tt P}}
\newcommand{\aloop}{{\tt L}}

\newcommand{\varlist}{\texttt{L}}

\newcommand{\aformA}{\phi}
\newcommand{\aformB}{\psi}

\newcommand{\acond}{{\tt C}}
\newcommand{\Hoare}[3]{#1\ \{ #2 \}\ #3}

\newcommand{\niko}[1]{{\color{red} Niko: #1}}
\newcommand{\mnacho}[1]{{\color{blue} Mnacho: #1}}
\newcommand{\yanis}[1]{{\color{green} Yanis: #1}}

\newcommand{\removecomments}{
	\renewcommand{\niko}[1]{}
	\renewcommand{\mnacho}[1]{}
	\renewcommand{\yanis}[1]{}}

%\removecomments

\section{Introduction}

Hoare logic -- together with strongest post-conditions or weakest pre-conditions calculi -- allow one to verify properties of programs 
defined by bounded sequences of instructions \cite{Hoare:1969:ABC:363235.363259}. Given a pre-condition $\aformA$ satisfied by the inputs of program $\aprog$, 
algorithms exist to compute the strongest formula $\aformB$ such that $\Hoare{\aformA}{\aprog}{\aformB}$ holds, meaning that if $\aformA$ holds initially then $\aformB$ is satisfied after $\aprog$ is executed, and any formula $\aformB'$ that holds after $\aprog$ is executed is such that $\aformB\models \aformB'$. To check that the final state satisfies some formula $\aformB'$, we thus only have to check that 
$\aformB'$ is a logical consequence of $\aformB$.
However, in order to handle programs containing loops, it is necessary to associate each loop occurring within the program with an {\em inductive invariant}. An inductive invariant for a given loop $\aloop$ is a formula that  holds every time the 
program enters $\aloop$ (\ie, it must be a logical consequence of the preconditions of $\aloop$), and is preserved by the sequence of instructions in $\aloop$.
Testing whether a formula is an inductive invariant is a straightforward task, and the difficulty resides in  generating candidate invariants.
These can be supplied by the programmer, but this is a rather tedious and time-consuming task; for usability and scalability,  
it is preferable to generate those formulas automatically when possible.
In this paper, we describe a system to generate such invariants in a completely automated way, via abductive reasoning modulo theories, based on the methodology developed in \cite{DBLP:conf/oopsla/DilligDLM13}. Roughly speaking, the algorithm works as follows. Given a program $\aprog$ decorated with a set of assertions that are to be established, all loops are first assigned the same candidate invariant $\true$. These invariants
are obviously sound:  they hold before the loops and are preserved by the sequence of instructions in the loop; however they are usually not strong enough to prove
the assertions decorating the program. They are therefore strengthened by adding hypotheses that are sufficient to ensure that the assertions hold; these hypotheses are generated by a tool that performs \emph{abductive inferences}, and the strengthened formulas are \emph{candidate invariants}. Additional strengthening
steps are taken to guarantee that these candidates are actual invariants, \ie, that they
are preserved by the sequence of instructions in the loop. These steps are iterated until a set of candidate invariants that are indeed inductive is obtained. 

We rely on two existing systems to accomplish this task. The first one is \why (see, \eg, \url{http://why3.lri.fr/} or \cite{filliatre13esop}), a well-known and widely-used  platform for deductive program verification that is used to compute verification conditions and verify assertions. The second system, \gpid,
  is
designed to generate implicants\footnote{An implicant of a formula $\psi$ is a formula $\phi$ such that $\phi\models \psi$. It is the dual notion of that of implicates}
of quantifier-free formulas modulo theories \cite{EPS18}. This system is used as an abductive reasoning procedure, thanks to the following property: if $\phi\not\models\psi$, finding a hypothesis $\phi'$ such that $\phi\wedge \phi'\models \psi$ is equivalent to finding $\phi'$ such that $\phi'\models \phi\Rightarrow \psi$. 
{\gpid} is  generic, since it only relies on 
the existence of a decision procedure for the considered theory (counter-examples
are exploited when available to speed-up the generation of the implicants when available). 
Both systems are connected in the \Ilinva framework.

\paragraph*{Related Work.}

A large number of different techniques have been proposed to generate loop invariants automatically, especially on numeric domains \cite{DBLP:conf/cav/Bradley12,DBLP:journals/fac/BradleyM08},
but also in more expressive logics, for programs containing arrays or expressible using combination of theories \cite{DBLP:conf/popl/PadonISKS16,DBLP:conf/vmcai/BeyerHMR07,DBLP:conf/fase/KovacsV09,DBLP:journals/corr/abs-1010-1872,DBLP:journals/jacm/KarbyshevBIRS17,DBLP:conf/cade/KovacsV09}.
We only briefly review 
the main ideas of the most popular and successful approaches.
Methods based on abstract interpretations (see, \eg, \cite{Cousot:1978:ADL:512760.512770,Mine:2006:OAD:1145489.1145526}) 
work by executing the program in a symbolic way, 
on some abstract domain, and try to compute over-estimations of the possible states of the memory after an unbounded number of iterations of the loop.
Counter-examples generated from runs can be exploited to refine the considered abstraction \cite{10.1007/3-540-45251-6_29,10.1007/3-540-44829-2_17}.
The idea is that upon detection of a run for which the assertion is violated, if the run does not correspond to a concrete 
execution path, then the considered abstraction may be refined to dismiss it.

Candidate invariants can also be inferred by generating formulas of some  user-provided patterns and testing them against some  particular
executions of the program \cite{Ernst:1999:DDL:302405.302467}. Those formulas that are violated in any of the runs can be rejected, and the soundness of the remaining candidates can be checked afterwards.
Invariants can be computed by using iterative backward algorithms \cite{SI77}, starting from the post-condition and computing weakest pre-conditions until a fixpoint is reached (if any).
Other approaches \cite{DBLP:journals/jossac/Kapur06} have explored the use of quantifier elimination to refine properties obtained using a representation of all execution paths.

The work that is closest to our approach is \cite{DBLP:conf/oopsla/DilligDLM13}, which presents
an algorithm to compute invariants as boolean combinations of linear constraints over integers. 
The algorithm is 
similar to ours, and also uses abduction to strengthen candidate invariant so that verification conditions are satisfied.
The algorithms differ by the way the verification conditions and abductive hypotheses are proceeded: in our 
approach the conditions always propagate forward from an invariant to another along execution paths, and we eagerly ensure that all the loop invariants are  inductive.
Another difference is that we use a completely different technique to perform abductive reasoning: 
in \cite{DBLP:conf/oopsla/DilligDLM13} is based on model construction and quantifier elimination for Presburger arithmetic, whereas our approach uses a generic algorithm, assuming only the existence of a 
decision procedure for the underlying theory.
This permits to generate invariants expressed in theories that are out of the scope of \cite{DBLP:conf/oopsla/DilligDLM13}.

\paragraph*{Contribution.}
The main contribution is the implementation 
of a general framework for the generation of loop invariants, connecting the platform \whyt and \gpid.
The evaluation demonstrates that the system permits to generate loop invariants for a wide range of theories, though it suffers from a large search space which may induce a large computation time.

\newcommand{\compl}[1]{#1^c}

\section{Verification Conditions}

\newcommand{\posprog}[1]{\mathrm{loc}(#1)}
\newcommand{\looppos}[1]{\mathrm{lloc}(#1)}
\newcommand{\loops}[1]{\mathrm{loops}(#1)}

\newcommand{\ifprog}[3]{\texttt{if } #1 \texttt{ then } #2 \texttt{ else } #3}
\newcommand{\whileprog}[3]{\texttt{while } #1 \texttt{  do } #2 \{ #3 \}  \texttt{ end}}
\newcommand{\forprog}{\texttt{for}}
\newcommand{\aninst}{{\tt I}}
\newcommand{\seqprog}[2]{#1\ ;\ #2}
\newcommand{\assume}[1]{\texttt{assume } #1}
\newcommand{\assert}[1]{\texttt{assert } #1}
\newcommand{\emptyprog}{\texttt{empty}}

\newcommand{\postcondf}{\mathit{sp}}
\newcommand{\postcond}[2]{\postcondf(#1,#2)}
\newcommand{\postcondpos}[3]{\postcondf(#1,#2,#3)}

\newcommand{\precondf}{\mathit{wp}}
\newcommand{\precond}[2]{\precondf(#1,#2)}
\newcommand{\precondpos}[3]{\precondf(#1,#2)}

\newcommand{\sameprog}{\sim}

\label{sect:progs}

In what follows, we consider formulas in a base logic expressing  properties of the memory and assume that such formulas are closed under the usual boolean connectives. These formulas are interpreted modulo some theory $\theory$, where $\modelst$ denotes logical entailment w.r.t.\ $\theory$.
The memory is modified by programs, which are sequences of instructions; they are inductively defined as follows: 
\compact{
\begin{eqnarray*}
    \aprog & = & \emptyprog \quad | \quad \seqprog{\aninst}{\aprog'} \\
\aninst & = &  \tuple{\texttt{base-instruction}} \quad | \quad
 \assume{\aformA} \quad | \quad
 \assert{\aformA} \\
    &  & | \quad \ifprog{\acond}{\aprog_1}{\aprog_2} \quad | \quad
  \whileprog{\acond}{\aprog_1}{\aformA}
  \end{eqnarray*}
}
where $\aprog'$, $\aprog_1$ and $\aprog_2$ are programs, $\acond$ is a condition on the state of the memory, $\phi$ is a 
formula  
and $\aninst$ is an instruction.  Assumptions  correspond to formulas that are taken as hypotheses, they are mostly useful to specify pre-conditions. Assertions correspond to formulas
that are to be proved. Base instructions are left unspecified, they depend on the target language and
application domain; they may include, for instance, assignments and pointer redirection.   
The formula $\aformA$ in the \texttt{while} loop is a \emph{candidate loop invariant}, it is meant to hold every time condition $\acond$ is tested. In our setting each candidate loop invariant will be set to $\top$ before invoking {\Ilinva} (except when another formula is provided by, \eg, the user), and the program will iteratively update these formulas.
We assume that conditions contain no instructions, \ie, that the evaluation of these conditions does not affect the memory.
We write $\aprog \sameprog \aprog'$ if programs $\aprog$ and $\aprog'$ are identical up to the loop candidate invariants.

\newcommand{\position}{location\xspace}
\newcommand{\Position}{Location\xspace}
\newcommand{\location}{\position}
\newcommand{\emptypos}{\varepsilon}
\newcommand{\apos}{\ell}
\newcommand{\proginst}[2]{#1|_{#2}}
\newcommand{\lastpos}[2]{\mathrm{end}(#1,#2)}

An example of a program is provided in Figure \ref{ex:prog}. It uses assignments on integers and usual constructors and functions on lists as base instructions. It contains one loop with candidate invariant $\top$ (Line \ref{line:exinv}) and one assertion (Line \ref{line:exassert}).

\begin{wrapfigure}[10]{L}{0.42\textwidth}
{\scriptsize
\setlength{\interspacetitleruled}{0pt}%
\setlength{\algotitleheightrule}{0pt}%
\begin{algorithm}[H]
    \SetKw{Let}{let}
    \SetKw{List}{list}
    \SetKw{Nil}{nil}
    \SetKw{Unknown}{unknown}
    \SetKw{Assert}{assert}
    
    \SetKw{Head}{head}
    \SetKw{Length}{length}
    
    \SetKwFor{While}{while}{do $\{\top\}$}{}
    
    \Let $\texttt{i} \leftarrow 1$ \;
    \Let $\varlist \leftarrow \List(1, \Nil)$ \;
    
    \While{$\Unknown()$ \label{line:exinv}}{
    	$\texttt{i} \leftarrow \texttt{i} + 1$ \;
        $\varlist \leftarrow \List(\texttt{i}, \varlist)$ \;
    }

    \Assert $\Head(\varlist) = \Length(\varlist)$ \;\label{line:exassert}
\end{algorithm}
\caption{{A simple program on lists\label{ex:prog}}}}
\end{wrapfigure}
It contains one loop for which we will generate an invariant.

A \emph{\position} is a finite sequence of natural numbers. The empty \position is denoted by $\emptypos$ and the concatenation of two {\position}s $\apos$ and $\apos'$ is denoted by $\apos.\apos'$. If $\apos$ is a \position and $S$ is a set of {\position}s then 
$\apos.S$ denotes the set $\{ \apos.\apos' \mid \apos' \in S \}$.
The set of {{\position}s} in a program $\aprog$ or in an instruction $\aninst$ is inductively defined as follows:
\begin{itemize}
\item{If $\aprog$ is an empty sequence then $\posprog{\aprog} = \{ 0 \}$.}
\item{If $\aprog = \seqprog{\aninst}{\aprog'}$ then $\posprog{\aprog} = \{ 0 \} \cup 0.\posprog{\aninst} \cup \{ (i+1).p \mid i \in {\Bbb N}, i.p \in \posprog{\aprog'} \}$.}
\item{If $\aninst$ is a base instruction or an assumption/assertion, then 
$\posprog{\aninst} = \emptyset$.}
\item{If $\aninst = \ifprog{\acond}{\aprog_1}{\aprog_2}$ then $\posprog{\aninst} = 1.\posprog{\aprog_1} \cup 2.\posprog{\aprog_2}$.}
\item{If $\aninst = \whileprog{\acond}{\aprog_1}{\aformA}$ then $\posprog{\aninst} = 1.\posprog{\aprog_1}$.}
 \end{itemize}
For instance, a program $\seqprog{\aninst_1}{\aninst_2}$ where $\aninst_1,\aninst_2$ denote base instructions
has three {\position}s: $0$ (beginning of the program), $1$ (between $\aninst_1$ and $\aninst_2$)
and $2$ (end of the program). Note  that there are no {\position}s within an atomic instruction. %\niko{added:}
The program in Figure \ref{ex:prog} has eight {\position}s, namely $0$, $1$, $2$, $2.1.0$, $2.1.1$, $2.1.2$, $3$, $4$.
We denote by $\proginst{\aprog}{\apos}$ the instruction occurring just after \position $\apos$ in $\aprog$ (if any):
\begin{itemize}
\item{If $\aprog = \seqprog{\aninst}{\aprog'}$ then $\proginst{\aprog}{0} = \aninst$,
$\proginst{\aprog}{0.\apos} = \proginst{\aninst}{\apos}$ and 
$\proginst{\aprog}{(i+1).\apos} = \proginst{\aprog'}{i.\apos}$.}
\item{If $\aninst = \ifprog{\acond}{\aprog_1}{\aprog_2}$ then $\proginst{\aninst}{1.\apos} = \proginst{\aprog_1}{\apos}$ and $\proginst{\aninst}{2.\apos} = \proginst{\aprog_2}{\apos}$.}
\item{If $\aninst = \whileprog{\acond}{\aprog_1}{\aformA}$ then $\proginst{\aninst}{1.\apos} = \proginst{\aprog_1}{\apos}$.}
 \end{itemize}
 Note that $\apos \mapsto \proginst{\aprog}{\apos}$ is a partial function, 
 since {\position}s denoting the end of a sequence  do not correspond to an instruction.
 We denote by  $\looppos{\aprog}$ the set of {\position}s $\apos$ in $\aprog$ such that 
 $\proginst{\aprog}{\apos}$ is a loop and by 
 $\loops{\aprog}  = \{ \proginst{\aprog}{\apos} \mid \apos \in \looppos{\aprog} \}$ 
 the set of loops occurring in $\aprog$.  For instance, if $\aprog$ denotes the program in Figure \ref{ex:prog}, then $\proginst{\aprog}{1}$ is $\mathbf{let\ } \varlist \leftarrow \mathbf{list}(1, \mathbf{nil})$, and $\looppos{\aprog} = \{ 2 \}$.

We denote by $<$ the usual order on {\location}s: 
$\apos < \apos'$ iff either there exist
 numbers $i,j$ and {\position}s $\apos_1,\apos_2,\apos_3$ such that 
$\apos = \apos_1.i.\apos_2$, 
$\apos = \apos_1.j.\apos_3$ and
$i < j$, or there exists a \position $\apos''$ such that $\apos' = \apos.\apos''$.

\newcommand{\vcgenf}{\mathrm{VCgen}}
\newcommand{\vcgen}[1]{\vcgenf(#1)}
\newcommand{\vcgena}[1]{\vcgenf_\mathrm{a}(#1)}
\newcommand{\vcgenind}[2]{\vcgenf_{\mathrm{ind}}(#1,#2)}
\newcommand{\vcgeninit}[2]{\vcgenf_{\mathrm{init}}(#1,#2)}
  
  We assume the existence of a procedure $\vcgenf$ that, given a program $\aprog$,
  generates a set of {\em verification conditions} for $\aprog$. 
  These verification conditions are formulas of the form $\phi \Rightarrow \psi$, each of which is meant to be valid. Given a program $\aprog$, the set of conditions $\vcgen{\aprog}$ can be decomposed as follows:
  \begin{enumerate}
  \item{{\em Assertion conditions}, which ensure that the assertion formulas hold at the corresponding \position in the program. These conditions also include additional properties to prevent memory access errors, \eg, to verify that the index of an array is within the defined valid range of indexes. The set of assertion conditions for program $\aprog$ is denoted by $\vcgena{\aprog}$.}
  \item{{\em Propagation conditions}, ensuring that loop invariants do propagate. Given a loop $\aloop$ occurring at position $\apos$ in program $\aprog$, we denote by $\vcgenind{\aprog}{\apos}$ the set of assertions ensuring that the loop invariant for $\aloop$ propagates.}
  \item{{\em Loop pre-conditions}, ensuring that the loop invariants hold when the corresponding loop is entered. Given a loop $\aloop$ occurring at position $\apos$ in program $\aprog$, we denote by $\vcgeninit{\aprog}{\apos}$ the set of assertions ensuring that the loop invariant holds before loop $\aloop$ is entered.}
  \end{enumerate}
Thus, 
\(\vcgen{\aprog} = \vcgena{\aprog} \cup \left(\bigcup_{\apos\in \looppos{\aprog}} (\vcgenind{\aprog}{\apos} \cup \vcgeninit{\aprog}{\apos})\right).\)
Such verification conditions are generally defined using standard weakest pre-condition or strongest post-condition calculi (see, \eg, \cite{Dijkstra:1997:DP:550359}), 
where loop invariant are used as under-appro\-xi\-mations.
 Formal definitions are recalled in Figures \ref{fig:wp} and \ref{fig:sp} (the definition for the basic instructions depends on the application language and is thus omitted). 
 For the sake of readability, we assume, by a slight abuse of notation, that the condition $\acond$  is also a
 formula in the base logic.

\begin{figure}[t]
\compact{
\[
\begin{tabular}{rcl}
\hline
$\precond{\aformA}{\emptyprog}$	& 	$=$	& 	$\aformA$ \\
$\precond{\aformA}{\seqprog{\aninst}{\aprog}}$	& 	$=$	& 	$\precond{\precond{\aformA}{\aprog}}{\aninst}$ \\
$\precond{\aformA}{\assume{\aformA'}}$ &	$=$ & $\aformA' \Rightarrow \aformA$ \\
$\precond{\aformA}{\assert{\aformA'}}$ &	$=$ & $\aformA' \wedge \aformA$  \\
$\precond{\aformA}{\ifprog{\acond}{\aprog_1}{\aprog_2}}$ &	$=$ & $\acond \Rightarrow \precond{\aformA}{\aprog_1} \wedge 
   \neg \acond \Rightarrow \precond{\aformA}{\aprog_2}$ \\
$\precond{\aformA}{\whileprog{\acond}{\aprog_1}{\aformB}}$ & $=$ &
$\aformB \wedge  \forall \vec{x}.~ (\aformB \Rightarrow \precond{\aformB}{\aprog_1}) \wedge \forall \vec{x}.~ (\aformB \wedge \neg \acond \Rightarrow \aformA)$ \\ 
\hline
\end{tabular}
\]
The formula in the last line states that the loop invariant holds when the loop is entered, that it propagates and that it entails the formula $\aformA$ . The vector $\vec{x}$ denotes the vector of variables occurring in $\aprog_1$. }
\caption{\compact{A Weakest Precondition Calculus}\label{fig:wp}}
\end{figure}
This permits to define the goal of the paper in a more formal way:
our aim is to define an algorithm that, given a program $\aprog$, constructs a program $\aprog' \sameprog \aprog$ (\ie,  constructs loop invariants for each loop in $\aprog$)
such that $\vcgen{\aprog'}$ only contains valid formulas.
Note that all the loops and invariants must be handled globally since verification conditions depend on one  another.

\begin{figure}[t]
\compact{
\[
\begin{tabular}{rcl}
\hline
$\postcond{\aformA}{\emptyprog}$	& 	$=$	& 	$\aformA$ \\
$\postcond{\aformA}{\seqprog{\aninst}{\aprog'}}$	& 	$=$	& 	$\postcond{\postcond{\aformA}{\aninst}}{\aprog'}$ \\
$\postcond{\aformA}{\assume{\aformA'}}$ &	$=$ & $\aformA \wedge \aformA'$ \\
$\postcond{\aformA}{\assert{\aformA'}}$ &	$=$ & $\aformA$  \\
$\postcond{\aformA}{\ifprog{\acond}{\aprog_1}{\aprog_2}}$ &	$=$ & $\postcond{\aformA \wedge \acond}{\aprog_1} \vee  
    \postcond{\aformA \wedge \neg \acond}{\aprog_2}$ \\
$\postcond{\aformA}{\whileprog{\acond}{\aprog_1}{\aformB}}$ & $=$ & $\aformB \wedge \neg \acond$ \\
\hline
\end{tabular}
\]
$\postcond{\aformA}{\aprog}$ describes the state of the memory after $\aprog$. The conditions corresponding to loops are approximated by 
using the provided loop invariants (the corresponding verification conditions are not stated).}
 \caption{\compact{A Strongest Postcondition Calculus}\label{fig:sp}}
\end{figure}

\section{Abduction}

\label{sect:abd}

\newcommand{\abducibles}{{\cal A}}
\newcommand{\alit}{l}
\newcommand{\acl}{I}
\newcommand{\calP}{{\cal P}}
\newcommand{\cmodel}{\mathfrak{m}}
\newcommand{\simplify}{\textsc{Simplify}}
\newcommand{\impid}{\textsc{GPiD}}
\newcommand{\set}[1]{\left\{#1\right\}}
\newcommand{\UNSAT}[1]{$#1$ unsatisfiable (modulo $\theory$)}
 \newcommand{\implicant}{$\abducibles$-implicant\xspace}
 
As mentioned above, abductive reasoning will be performed by generating implicants. Because it would not be efficient to blindly generate all implicants of a formula, this generation is controlled by fixing the literals that can occur in an implicant. We thus consider a set $\abducibles$ of literals in the considered logic, called the {\em abducible literals}.

\compact{
\begin{algorithm}[t]
	\caption{\impid($\aformA, M, A, \calP$)}\label{alg:impid}
	\SetKw{Let}{let}
	\SetKw{Simplify}{simplify}
	\SetKw{And}{and}
	\SetKw{Or}{or}
	\If{\UNSAT{M} \Or $\neg \calP(M)$ \label{line:sat}}
	{
		\Return $\emptyset$\;	
	}
	\If{$M \models \aformA$}
	{		
		\Return $\set{ M }$\;
	}
	\Let{$\cmodel$ be a model of $\{ \neg \aformA \}  \cup M$\;}
	\Let{$\aformA = \simplify(\aformA,M)$\;}
	\Let{$A = \{ \alit \in A \mid M \cup \neg \aformA \not \modelst \alit, M  \not \modelst \compl{\alit} \}$\;}  \label{line:linkedsimp}
	\ForEach{$\alit\in A$ such that $\cmodel \not \models \alit$\label{line:mod}}
	{		
		\Let{$A_\alit = \setof{\alit' \in A}{\alit' < \alit \wedge \cmodel \models \alit'} \cup \setof{\alit' \in A}{\alit < \alit'}$\;\label{line:Alit}}
		\Let{$P_\alit = \impid(\aformA, M\cup \set{\alit}, A_\alit, \calP)$\;\label{line:Reccal}}		
	}
	\Return $\bigcup_{\alit\in A} P_\alit$\;	
\end{algorithm}}

\begin{definition}
Let $\aformA$ be a formula.
An {\em \implicant} of $\aformA$ (modulo $\theory$) is a conjunction
(or set) of literals $\alit_1 \wedge \dots \wedge \alit_n$ such that
$\alit_i \in \abducibles$, for all $i \in \interv{1}{n}$ and 
$\alit_1 \wedge \dots \wedge \alit_n \modelst \aformA$.
\end{definition}
We use the procedure \gpid described in \cite{EPS18} to generate 
 {\implicant}s.
A simplified version of this procedure is presented in Algorithm \ref{alg:impid}.
A call to the procedure \impid($\aformA, M, A, \calP$) is meant to generate {\implicant}s of $\phi$ that: (i) are of the form $M \cup A'$, for some $A' \subseteq A$; (ii) are as general as possible; and (iii) satisfy property $\calP$. When $M$ itself is not an {\implicant} of $\phi$, a subset of relevant literals from $A$ is computed (Line \ref{line:linkedsimp}), and for each literal in this subset, a recursive call is made to the procedure after augmenting $M$ with this literal and discarding all those that become irrelevant (Lines \ref{line:Alit} and \ref{line:Reccal}). In particular, the algorithm is parameterized by an ordering $<$ on abducible literals which is used
to ensure that sets of hypotheses are explored in a non-redundant way. The algorithm relies on the existence of a decision procedure for testing satisfiability in $\theory$ (Line \ref{line:sat}).
In practice, this procedure does not need terminating or complete\footnote{However, Theorem \ref{theo:sound} only holds if the proof procedure is terminating and complete.}, \eg, it may be called  with a timeout (any ``unknown'' result is handled as ``satisfiable'').
At Line \ref{line:mod}, a model of the formula
$\{ \neg \aformA \}  \cup M$ is used to prune the search space, by dismissing some
abducible literals. In practice, no such model may be available,
either because no model building algorithm exists for the considered theory or because of termination issues. In this case, no such pruning is performed.
Property $\calP$ denotes an abstract property of sets of literals. It is used to control the form of generated {\implicant}s, it is for example possible to force the algorithm to only generate {\implicant}s with a fixed maximal size. For Theorem \ref{theo:sound} to hold, it is simply required that $\calP$ be \emph{closed under subsets}, \ie, that for all sets of abducible literals $B$ and $C$, $B \subseteq C \wedge \calP(C) \Rightarrow \calP(B)$. 

Compared to \cite{EPS18}, details that are irrelevant for the purpose of the present paper are skipped and the procedure has been adapted to generate {\implicant}s instead of implicates (implicants and implicates are dual notions).

\begin{theorem}[\cite{EPS18}]
\label{theo:sound}
The call $\impid(\aformA, 
\emptyset, \abducibles, \calP)$ terminates and returns a set of 
{\implicant}s of $\aformA$  satisfying $\calP$. Further, if $\calP$ is closed under subsets, then
for every \implicant $\acl$ of $\aformA$ satisfying $\calP$, there exists $\acl' \in \impid(\aformA, 
\emptyset, \abducibles, \calP)$ such that $\acl \modelst \acl'$.
\end{theorem}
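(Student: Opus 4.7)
The theorem has three components---termination, soundness, and generalisation-completeness---which I would treat in turn. For termination, every recursive call at Line \ref{line:Reccal} replaces $A$ by $A_\alit$, and since $\alit \notin A_\alit$ (the ordering $<$ being strict) we have $|A_\alit| < |A|$; this, together with finiteness of $\abducibles$, bounds the recursion tree. Soundness is then immediate by induction on the recursion: the only branch that returns a non-empty singleton requires $M$ to be $\theory$-consistent, $\calP(M)$ to hold, and $M \modelst \aformA$, and an easy invariant shows $M \subseteq \abducibles$, so the returned set is indeed an $\abducibles$-implicant of $\aformA$ satisfying $\calP$.

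For the generalisation statement, I would prove, by induction on the (now finite) recursion tree of $\impid(\aformA, M, A, \calP)$, the following strengthened invariant: for every $\abducibles$-implicant $\acl$ of $\aformA$ with $M \subseteq \acl$, $\acl \setminus M \subseteq A$, and $\calP(\acl)$, the call returns some $\acl'$ with $\acl \modelst \acl'$. The base case $M \modelst \aformA$ returns $\set{M}$, and $\acl \supseteq M$ gives $\acl \modelst M$. In the inductive case I first need to argue that the filtering at Line \ref{line:linkedsimp} does not discard a literal essential to $\acl$: if some $\alit \in \acl \setminus M$ were removed because $M \modelst \compl{\alit}$ then $\acl$ would be $\theory$-inconsistent (and hence may be ignored); if it were removed because $M \cup \neg \aformA \modelst \alit$, then $\acl \setminus \set{\alit}$ remains an implicant (any countermodel of it would satisfy $M \cup \neg \aformA$, hence $\alit$, contradicting its being a countermodel of $\acl$) and satisfies $\calP$ by closure under subsets, so I may replace $\acl$ with this strictly smaller implicant and iterate.

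The crux, and the main obstacle, is choosing a literal on which to recurse so that the remaining literals of $\acl$ survive the next filtering $A_\alit$. Using the countermodel $\cmodel$ of $\set{\neg \aformA} \cup M$, at least one literal of $\acl \setminus M$ must be falsified by $\cmodel$ (otherwise $\cmodel \models \acl \modelst \aformA$); pick $\alit$ to be the $<$-minimal such literal. Then for any $\alit' \in \acl \setminus (M \cup \set{\alit})$, either $\alit < \alit'$, placing $\alit' \in A_\alit$ by the second clause of Line \ref{line:Alit}, or $\alit' < \alit$, in which case minimality of $\alit$ forces $\cmodel \models \alit'$ and again $\alit' \in A_\alit$. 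Hence $\acl \setminus (M \cup \set{\alit}) \subseteq A_\alit$, the induction hypothesis applies to the recursive call spawned by this $\alit$ at Line \ref{line:Reccal}, and yields the desired $\acl'$. The subtle point is precisely this interaction between the $\cmodel$-based pruning at Line \ref{line:mod} and the $<$-based filtering inside $A_\alit$: verifying that no implicant of interest is silently lost is where the design of the ordering really pays off, and where totality of $<$ on $\abducibles$ (or at least comparability with the chosen $\alit$) is implicitly used.
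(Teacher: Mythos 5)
The paper contains no proof of this statement: Theorem~\ref{theo:sound} is imported from \cite{EPS18} and stated without proof, the surrounding text only noting that the procedure has been simplified and dualized (implicates to implicants). So there is no in-paper argument to compare yours against; judged on its own, your reconstruction is the standard correctness argument for this kind of model-guided implicant enumeration and it is sound. The decomposition into termination ($\alit \notin A_\alit$, so $|A|$ strictly decreases, plus finiteness of $\abducibles$), soundness (the only non-empty return site guarantees $M$ satisfiable, $\calP(M)$ and $M \modelst \aformA$, with $M \subseteq \abducibles$ by an easy invariant), and completeness via the strengthened recursion invariant ($M \subseteq \acl$, $\acl \setminus M \subseteq A$, $\calP(\acl)$ implies some returned $\acl'$ with $\acl \modelst \acl'$) is the right skeleton. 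The crux is handled correctly: a literal of $\acl$ discarded at Line~\ref{line:linkedsimp} is either inconsistent with $M$ (so $\acl$ is unsatisfiable and out of scope) or redundant in $\acl$ (so a smaller implicant, still satisfying $\calP$ by closure under subsets, can be pursued), and choosing the $<$-minimal $\cmodel$-falsified literal of $\acl \setminus M$ ensures the remaining literals survive into $A_\alit$.

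Two caveats, both of which you at least partially flag and which are worth promoting to explicit hypotheses. First, the argument genuinely needs $<$ to be total on $\abducibles$: two incomparable $\cmodel$-falsified literals of $\acl$ would each be dropped from the other's $A_\alit$, making $\acl$ unreachable. The paper only says ``an ordering'', so your remark about totality identifies a real, not cosmetic, assumption (it holds in \cite{EPS18}). Second, the completeness claim must be read as ranging over $\theory$-satisfiable implicants only --- an unsatisfiable $\acl$ can cause every abducible to be filtered out and the output to be empty --- and termination and the availability of $\cmodel$ presuppose a terminating, complete decision procedure for $\theory$, which the paper itself only guarantees via the footnote attached to the theorem.
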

This procedure  also comes with generic algorithms for 
pruning redundant {\implicant}s 
\ie, for removing all {\implicant}s $\acl$ such that there exist another {\implicant} $\acl'$ such that 
$\acl \modelst \acl'$, see \cite[Section $4$]{EPS18}.

\section{Generating Loop Invariants}

\label{sect:gen}

\newcommand{\hypsA}{\Xi}
\newcommand{\hypA}{\xi}

\newcommand{\backprop}[4]{\mathit{bp}(#1,#2,#3,#4)}
\newcommand{\forwardprop}[4]{\mathit{fp}(#1,#2,#3,#4)}
\newcommand{\extprog}[3]{\mathit{path}(#1,#2,#3)}
\newcommand{\nowhile}[1]{\mathit{RmLoops}(#1)}

\newcommand{\str}[3]{\mathrm{Strengthen}(#1,#2,#3)}
\newcommand{\fail}{\mathbf{fail}}
\newcommand{\extractabd}{\textsc{GetAbducibles}}
\newcommand{\Abduce}{\textsc{Abduce}}

In this section, we present an algorithm for the generation of loop invariants.
As explained in Section \ref{sect:progs}, we distinguish between $3$ kinds of verification conditions, which will be handled in different ways: assertion and propagation conditions; and loop pre-conditions.
As can be seen from the rules in Figure \ref{fig:wp}, 
loop invariants can occur as antecedents in verification conditions, this is typically the case
when a loop occurs just before an assertion in some execution path.
In such a situation, we say that the considered condition {\em depends on}  loop $\aloop$.
When a condition depends on a loop, a strengthening of the loop invariant of loop $\aloop$ yields a 
strengthening of the hypotheses of the verification condition, \ie, makes the condition 
less general (easier to prove).

\compact{
\begin{algorithm}[t]
	\caption{\Ilinva(Program $\aprog$)}
	\label{algo:ilinva}
	\SetKw{Let}{let}
	\SetKw{SuchThat}{such that}
	\SetKw{Break}{break}
	\If{all formulas in $\vcgena{\aprog}$ are valid\label{line:test}}{
		\Return $\aprog$\;
	}
	\Let $\aformA$ be a non valid formula in $\vcgena{\aprog}$, depending on a loop at \location $\apos$\;  \label{line:select}
	\Let $\hypsA \longleftarrow \Abduce(\aformA,\aprog,\apos)$\;  \label{line:abd}
	\ForEach{$\hypA \in \hypsA$\label{line:foreach}}{ 
		\ForEach{$\apos'\in \looppos{\aprog}$ \SuchThat $\apos'\leq \apos$ \label{line:iterloop}}{
			\Let $\hypA' \longleftarrow \backprop{\hypA}{\aprog}{\apos}{\apos'}$\; \label{line:bk} 
			\Let $\aprog_{\hypA} \longleftarrow \str{\aprog}{\apos'}{\hypA'}$ \;\label{line:str}
			\If{$\vcgeninit{\aprog_{\hypA}}{\apos'}$ is valid\label{line:filter}}{
				\Let $\aprog_{\hypA}' \longleftarrow \Propagate(\aprog_{\hypA}, \apos')$ \label{line:prog} \;
				\If{$\aprog_{\hypA}' \not = \fail$}{
					\Let $\aprog_{\hypA}'' \longleftarrow \Ilinva(\aprog_{\hypA}')$\;\label{line:rec}
					\If{$\aprog_{\hypA}'' \not = \fail$}{\Return $\aprog_{\hypA}''$\;}
				} 
			}
		}
	}
	
	\Return{$\fail$} \;  
\end{algorithm}}

This principle is used in Algorithm \ref{algo:ilinva}, which we briefly describe before going into details. Starting with a program $\aprog$ in which it is assumed that every loop invariant is inductive, the algorithm attempts to recursively generate invariants that make all assertion conditions in $\aprog$ valid. It begins by selecting a non-valid formula $\aformA$ from $\vcgena{\aprog}$ and a {\location} $\apos\in \looppos{\aprog}$ such that $\aformA$ depends on $\apos$, then generates a set of hypotheses that would make $\phi$ valid (Line \ref{line:abd}). For each such hypothesis $\hypA$, a loop {\location} $\apos'$ such that $\apos'\leq\apos$ is selected, and a formula $\hypA'$ that is a weakest precondition at $\apos'$ causing $\hypA$ to hold at {\location} $\apos$ is computed (Line \ref{line:bk}). This formula is added to the invariant of the loop at {\location} $\apos'$ (Line \ref{line:str}), so that if this invariant was $\psi$, the new candidate invariant is $\hypA'\wedge \psi$. If $\hypA'$ does not hold before entering the loop then $\hypA$ is discarded (Line \ref{line:filter}); otherwise, the program attempts to update the other loop invariants to ensure that $\hypA'$ propagates (Line \ref{line:prog}). When this succeeds, a recursive call is made with the updated invariants (Line \ref{line:rec}) to handle the other non-valid assertion conditions.

\compact{
\begin{algorithm}[t]
	\caption{\Abduce(Formula $\aformA$, Program $\aprog$, \Position $\apos$)}
	\label{algo:abduce}
	\SetKw{Let}{let}
	\SetKw{SuchThat}{such that}
	\SetKw{Break}{break}
	\Let $\abducibles \longleftarrow \extractabd(\aformA)$ \label{line:collect} \; 
	\Let $\abducibles \longleftarrow \{ \alit \mid \alit \in \abducibles \wedge \aformA \not \modelst \alit \}$ \label{line:filtabd} \; 
	\Let $\hypsA \longleftarrow \impid(\aformA,\emptyset, \abducibles, {\calP})) $ \; 
	\Let $\hypsA' \longleftarrow \{ \hypA_1 \vee \dots \vee \hypA_n \mid n \in {\Bbb N}, \hypA_i \in \hypsA \}$ \label{line:disj} \; 
	\Return $\hypsA'$ \\
\end{algorithm}}

Procedure $\Abduce(\aformA)$ (invoked Line \ref{line:abd} of Algorithm \ref{algo:ilinva}) is described in Algorithm \ref{algo:abduce}. It generates  formulas $\hypA$ that logically entail $\aformA$; it is used to generate the candidate hypotheses for strengthening.
It first extracts a set of abducible literals $\abducibles$ by 
collecting variables and symbols from the program and/or from the theory $\theory$ and combining them to create literals up to a certain depth (procedure $\extractabd$ at Line \ref{line:collect}).
To avoid any redundancy, this task is actually done in two steps: a set of abducible literals for the entire program is initially constructed (this is done once at the beginning of the search), and depending on the considered program \position, a subset of these literals is selected.
The abducible literals that are logically entailed by $\aformA$ modulo $\theory$ are filtered out (Line \ref{line:filtabd}), and  procedure $\impid$ is called to generate {\implicant}s of $\aformA$.
Finally, {\implicant}s are combined to form disjunctive formulas.
Note that another way of generating disjunction of literals would be to add these disjunction in the initial set of abducible literals, but this solution would greatly increase the search space.

Each of the hypotheses $\hypA$ generated by $\Abduce(\aformA)$ is used to strengthen the invariant of a loop occurring at position $\apos'\leq \apos$
(Line \ref{line:str} in Algorithm \ref{algo:ilinva}). The strengthening formula is computed using the Weakest Precondition Calculus on $\hypA$, on a program obtained from $\aprog$ by ignoring all loops between $\apos'$ and $\apos$, since they have corresponding invariants.
To this purpose we define a function $\backprop{\aformA}{\aprog}{\apos}{\apos'}$ which, for positions $\apos' \leq \apos$, back-propagates 
abductive hypotheses from a \location $\apos$ to $\apos'$ (see Figure \ref{fig:bk}). 
This is done by extracting the part of the code 
$\extprog{\aprog}{\apos'}{\apos}$ between the {\location}s $\apos'$ and $\apos$  while ignoring loops, and computing the weakest precondition  corresponding to this part of the code and the formula $\aformA$.

\begin{figure}[t]
\compact{
\[
\begin{tabular}{rcll}
 $\extprog{\aprog}{\apos}{\apos}$	& 	$=$	& 	$\emptyprog$ \\
 $\extprog{\aprog}{\apos}{\apos'.(i+1)}$	& 	$=$	& 	$\extprog{\aprog}{\apos}{\apos'.i} \bullet \proginst{\aprog}{\apos'.i}$  &\ if $\apos \leq \apos'.i$\\
 $\extprog{\aprog}{\apos}{\apos'.0}$ & $=$	& $\extprog{\aprog}{\apos}{\apos'}$ &\ if $\apos \leq \apos'$ \\\vspace{1em}
 $\extprog{\aprog}{\apos.i.\apos'}{\apos.{(i+1)}}$ &$=$&  $\extprog{\aprog}{\apos.i.\apos'}{\apos.i.{m}}$ &\ $m = \max \{ j \mid \apos.i.j \in \posprog{\aprog} \}$ \\ 
 $\backprop{\aformA}{\aprog}{\apos}{\apos'}$	& 	$=$	& 	
 $\precond{\aformA}{\aprog'}$ &\ if 
 $\aprog' = \extprog{\nowhile{\aprog}}{\apos'}{\apos}$ \\
 $\forwardprop{\aformA}{\aprog}{\apos}{\apos'}$	& 	$=$	& 	
 $\postcond{\aformA}{\aprog'}$ &\ if 
 $\aprog' = \extprog{\nowhile{\aprog}}{\apos}{\apos'}$ \\

 \end{tabular}
 \]
 $\nowhile{\aprog}$ denotes the program obtained from $\aprog$ by removing all \texttt{while} instructions and $\bullet$ denotes the concatenation operator on programs.}
\caption{\compact{Backward and Forward Propagation of Abductive Hypotheses} \label{fig:bk}}
\end{figure}

\compact{
\begin{algorithm}[t]
	\caption{\Propagate(Program $\aprog$, \Position $\apos$)}
	\label{algo:prop}
	\SetKw{Let}{let}
	\SetKw{SuchThat}{such that}
	\SetKw{Break}{break}
	\If{all formulas in $\vcgenind{\aprog}{\apos}$ are valid}{
		\Return $\aprog$\;
	}
	\Let $\aformA$ be a non-valid formula in $\vcgenind{\aprog}{\apos}$ \;  
	\Let $\hypsA \longleftarrow \Abduce(\aformA,\aprog,\apos)$\; 
	\ForEach{$\hypA \in \hypsA$}{ 
		\ForEach{$\apos' \in \looppos{\aprog}$ such that $\apos$ is a prefix  of $\apos'$ (with possibly $\apos = \apos'$)}{
			\Let $\hypA' \longleftarrow \forwardprop{\hypA}{\aprog}{\apos}{\apos'}$\; %\label{line:bk} 
			\Let $\aprog_{\hypA}' \longleftarrow \str{\aprog}{\apos'}{\hypA'}$ \;
			\If{$\vcgeninit{\aprog_{\hypA}'}{\apos'}$ is valid\label{line:filterprop}}{
				\Let $\aprog_{\hypA}'' \longleftarrow \Propagate(\aprog_{\hypA}', \apos)$ \;
				\If{$\aprog_{\hypA}'' \not = \fail$}{
					\Return $\aprog_{\hypA}''$\;}
			}
		} 
	}
	\Return{$\fail$} \;  
\end{algorithm}}
The addition of hypothesis $\hypA'$ to the invariant of the loop at position $\apos'$ ensures that the considered assertion $\aformA$ holds, 
but it is necessary to ensure that this strengthened invariant is still inductive.
This is done as follows.
Line \ref{line:filter} of Algorithm \ref{algo:ilinva} filters away all candidates for which the precondition before entering the loop is no longer valid, and Algorithm \ref{algo:prop} ensures that the candidate still propagates.
This algorithm behaves similarly to Algorithm 
\ref{algo:ilinva} (testing the verification conditions in $\vcgenind{\aprog}{\apos}$ instead of  those
in $\vcgena{\aprog}$), except that it  strengthens the invariants that correspond either to 
the considered loop, or to other loops occurring within it (in the case of nested loops).
Note that in this case, properties must be propagated forward, from \position $\apos$ to the actual 
\position of the strengthened invariant, using a Strongest Postcondition Calculus (Function $\forwardprop{\aformA}{\aprog}{\apos}{\apos'}$ in Figure \ref{fig:bk}).
This technique avoids considering  hypotheses that do not  propagate.

When applied on the program in Figure \ref{ex:prog},
\Ilinva  first sets the initial invariant of the loop to $\true$ and 
considers the assertion 
$\phi: \mathbf{head}(\varlist) = \mathbf{length}(\varlist)$. 
As the entailment 
$\true \models \phi$ does not hold, 
it will call \gpid to get an implicant of $\true \Rightarrow \phi$. Assume that \gpid returns the (trivial)
solution $\phi$. As $\phi$ indeed holds when the loop is entered\footnote{This can be checked by 
computing the weakest precondition
of $\phi$ w.r.t.\ Lines $1,2$. The obtained formula is 
$\mathbf{head}(\mathbf{list}(1,\mathbf{nil})) = \mathbf{length}(\mathbf{list}(1,\mathbf{nil}))$ which is equivalent to $\true$ 
(w.r.t.\ the usual definitions of $\mathbf{list}$ and $\mathbf{head}$).}, \Ilinva will add $\phi$ to the invariant of the loop and call \Propagate.  Since $\phi$ does not propagate \Propagate will further strengthen the invariant, yielding, \eg, the correct solution:
$\phi \wedge \texttt{i}  = \mathbf{head}(\varlist)$. 

The efficiency of Algorithm \ref{algo:ilinva} crucially depends on the order 
in which  candidate hypotheses are processed at Line \ref{line:foreach} for the strengthening operation.
The  heuristic used in our current implementation is to try the simplest hypotheses with the highest priority.
Abducible atoms are therefore ordered as follows: first boolean variables, then equations between variables of the same sort, 
then applications of predicate symbols to variables (of the appropriate sorts) 
and finally deep literals involving function symbols (up to a certain depth).
In every case, negative literals are also considered, with the same priority as the corresponding atom.
Similarly,  
unit {\implicant}s are tested before non-unit ones, and single {\implicant}s before disjunctions of {\implicant}s.
In the iteration on line \ref{line:iterloop} of Algorithm \ref{algo:ilinva}, the loops that are closest to the considered assertions are considered first.
Due to the number of loops involved,
numerous parameters are used to control the application of the procedures, by fixing limits
on the number of abducible literals that may be considered and on the maximal size of {\implicant}s. 
When a call to $\Ilinva$ fails, these parameters are increased, using an iterative deepening search strategy.
The parameter controlling the maximal number of {\implicant}s in the disjunctions (currently either $1$ or $2$) is fixed outside of the loop as it has a strong impact on the computation cost. 

The following theorem states the main properties of the algorithm.
\begin{theorem}
\label{theo:ilinva}
Let $\aprog$ be a program such that 
$\vcgenind{\aprog}{\apos}$ and 
$\vcgeninit{\aprog}{\apos}$ are valid for all $\apos \in \looppos{\aprog}$.
If \Ilinva($\aprog$) terminates and returns a program $\aprog'$ other than $\fail$,  then 
$\aprog \sameprog \aprog'$ 
and $\vcgen{\aprog'}$ is valid modulo $\theory$.
Furthermore, if the considered set of abducible literals is finite (\ie, if there exists a 
finite set $\abducibles$ such that $\extractabd(\aformA) \subseteq \abducibles$ for all formulas 
$\aformA$), then \Ilinva($\aprog$) terminates.
\end{theorem}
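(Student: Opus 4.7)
The statement decomposes into a soundness claim (under successful termination) and a termination claim (under the finite-abducibles hypothesis); I would handle these separately while maintaining a common inductive invariant on the verification-condition status of the loops.

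For soundness, a routine structural induction on the recursion of \Ilinva and \Propagate shows $\aprog \sameprog \aprog'$: the only operation that rewrites a program is $\str{\cdot}{\apos'}{\hypA'}$, which conjoins $\hypA'$ to the candidate invariant at location $\apos'$ and leaves the program structure together with every other invariant untouched. The core argument then maintains the invariant (I): at every recursive call, both $\vcgenind{\cdot}{\apos}$ and $\vcgeninit{\cdot}{\apos}$ are $\theory$-valid for every $\apos \in \looppos{\cdot}$. Property (I) holds initially by hypothesis. After a strengthening at $\apos'$, Line \ref{line:filter} explicitly rechecks $\vcgeninit$ at $\apos'$ and the call to \Propagate at Line \ref{line:prog} restores $\vcgenind$ at $\apos'$ before the recursive call at Line \ref{line:rec}; for any other loop $\apos''$, conjoining a hypothesis to the invariant at $\apos'$ either adds assumptions to the context in which the conditions for $\apos''$ are computed (when $\apos' < \apos''$) and therefore cannot falsify already-valid ones, or leaves these conditions unchanged altogether. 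When \Ilinva finally returns at Line \ref{line:test}, $\vcgena{\aprog'}$ is valid by the guard, so (I) supplies the remaining components and $\vcgen{\aprog'}$ is valid.

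For termination, fix a finite $\abducibles_0$ with $\extractabd(\aformA) \subseteq \abducibles_0$ for every $\aformA$. By Theorem \ref{theo:sound}, every call to $\impid$ terminates, returning a finite set of {\implicant}s, and since there are only finitely many $\theory$-inequivalent boolean combinations of literals in $\abducibles_0$, each call to \Abduce yields a set $\hypsA'$ that is finite up to $\theory$-equivalence (even though the union at Line \ref{line:disj} is formally taken over all $n\in\mathbb{N}$). I would then define a measure $\mu(\aprog)$ as the sum over $\apos \in \looppos{\aprog}$ of the number of strictly stronger $\theory$-inequivalent formulas that may still be conjoined to the candidate invariant at $\apos$; this quantity is bounded by the size of the finite lattice of conjunctions of such boolean combinations modulo $\theory$. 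Every effective $\str$ call strictly decreases $\mu$ (a call leaving the invariant $\theory$-equivalent can be safely dismissed), so the recursion of \Ilinva and \Propagate, whose \texttt{foreach} loops range over finite sets, terminates by well-founded induction on $\mu$.

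The principal obstacle is the preservation case of (I) when the strengthened loop is nested within or encloses others, combined with the soundness of the back-propagation $\backprop{\hypA}{\aprog}{\apos}{\apos'}$ through the loop-free fragment $\extprog{\nowhile{\aprog}}{\apos'}{\apos}$: one must argue that installing the weakest precondition computed over the loop-free code really delivers $\hypA$ at $\apos$ on every concrete execution, using (I) to account for the effect of any intermediate loops erased by $\nowhile$, and symmetrically for $\forwardprop$ inside \Propagate. A secondary technical point is to control $\mu$ when the conjoined hypothesis is a back-propagated formula rather than a raw abducible combination, which requires keeping the strengthenings within a fixed finite class modulo $\theory$.
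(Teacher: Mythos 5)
Your proof follows essentially the same route as the paper's: an induction on the recursive calls that maintains the validity of the propagation and loop pre-conditions ($\vcgenf_{\mathrm{ind}}$ and $\vcgenf_{\mathrm{init}}$) at every call site --- using the guards at Lines \ref{line:filter} and \ref{line:filterprop} and the call to \Propagate at Line \ref{line:prog} --- combined with a well-foundedness argument over the finitely many candidate invariants built on $\abducibles$ for termination (the paper uses the multiset extension of the strict strengthening order where you use the counting measure $\mu$; these are interchangeable, and your explicit remarks about the formally infinite union at Line \ref{line:disj} and about keeping the back-propagated formulas in a finite class modulo $\theory$ are points the paper leaves implicit). One clarification: the ``principal obstacle'' you single out --- that $\backprop{\hypA}{\aprog}{\apos}{\apos'}$ must genuinely deliver $\hypA$ at location $\apos$ --- is not actually needed for this theorem, because the validity of $\vcgena{\aprog}$ is re-established solely by the guard at Line \ref{line:test} on each recursive call; the soundness of back-propagation affects only whether the algorithm makes progress, not the stated partial-correctness claim.
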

\begin{proof}
\begin{conference}
    The proof is provided in the extended version\footnote{\todo{Import the extended version on arxiv and add the link to the extended version here}}.
\end{conference}
\begin{report}
The proof is by induction on the recursive calls.
It is clear that $\aprog \sameprog \aprog'$ because the algorithm only modifies loop invariants.
By construction (Line \ref{line:test} of Algorithm \ref{algo:ilinva}), 
$\vcgena{\aprog}$ must be valid when $\aprog$ is returned.
By hypothesis $\vcgenind{\aprog}{\apos}$ and 
$\vcgeninit{\aprog}{\apos}$ are valid, and 
by definition $\vcgen{\aprog}  = \vcgena{\aprog} \cup \bigcup_{\apos\in \looppos{\aprog}} (\vcgeninit{\aprog}{\apos} \cup \vcgenind{\aprog}{\apos})$, thus $\vcgen{\aprog}$ is valid in this case.
Furthermore, it is easy to check, by inspection of 
Algorithm \ref{algo:prop}, that 
all the recursive calls to \Ilinva occur on programs such that 
$\vcgenind{\aprog}{\apos}$  is valid. 
Furthermore, all the formulas
$\vcgeninit{\aprog}{\apos'}$ are also valid, due to the tests at Line \ref{line:filter} 
in Algorithm \ref{algo:ilinva} and Line \ref{line:filterprop} in Algorithm \ref{algo:prop} (indeed, it is clear that the 
strengthening of the invariant at \position $\apos'$ preserves the validity of $\vcgeninit{\aprog}{\apos}$ for $\apos \not = \apos'$).
Thus the precondition above holds for these recursive calls and the result follows by the induction hypothesis.

Termination is immediate since there are only finitely many possible candidate invariants built on $\abducibles$, thus
the (strict) strengthening relation (formally defined as: $\aformA > \aformB \iff \aformB \models \aformA \wedge \aformA \not \models \aformB$) forms a well-founded order.
At each recursive call, one of the invariants is strictly strengthened and the other ones are left unchanged hence
the multiset of invariants is strictly decreasing, according to the multiset extension of the  strengthening relation.
\end{report}

\end{proof}

\section{Implementation}

\subsection{Overview}

The \Ilinva algorithm described in Section \ref{sect:gen} has been implemented by connecting \why 
with \gpid.
\begin{report}
        A workflow graph of this implementation is shown on Figure \ref{fig:workflow}.
        The input file (a \whyml program) and a configuration is forwarded to the tool via the command line.
        The tool then loads this input file within a wrapper where it identifies the candidate loop invariants that may be strengthened by the system.
        This wrapper will also modify the candidate loop invariants within the program when strengthened, and can export the corresponding file at any time.
        
        The main system then forwards the configuration to the invariant generation algorithm.
        During the execution of the \ilinva algorithm, the \whyml wrapper is tasked to query \whyt to check whether the latter is able to prove all the assertions of the updated program, and if not to recover the verification conditions that are not satisfied.
        Selected conditions are transferred to the main generator which will create appropriate abduction tasks for them, ask \gpid for implicants, select the meaningful ones and strengthen associated candidate loop invariants accordingly.
        When a proof for the verification conditions of the program is found, the file wrapper returns the program updated with the corresponding loop invariants.
        We also expressed that \gpid candidates can be pruned when they contradict loops initial conditions.
        Note that both \gpid and \whyt call external SMT solvers to check the satisfiability of formulas.
\end{report}
\begin{conference}
    A workflow graph of this implementation is detailed in the extended paper.
    Note that both systems themselves call external SMT solvers to check the satisfiability of formulas.
\end{conference}
In particular, the \gpid toolbox
is easy to plug to any \smtlib-compliant SMT solver. 
The framework is actually generic, in the sense that it could be plugged with other systems, both to generate and verify proof obligations and to strengthen loop invariants.
 It is also independent of the constructions used for defining the language: other 
constructions (\eg, \forprog\ loops) can be considered, provided they are handled by the program verification framework.

\begin{report}
    \begin{figure}
        \begin{center}
            \begin{tikzpicture}[scale=0.09]

\draw (80,0) -- (100,0) -- (100,30);
\draw (100,70) -- (100,100) -- (0,100) -- (0,0) -- (12.5,0);
\draw (17.5,0) -- (40,0);

\draw (40,-2.5) rectangle (80,2.5);
\node at (60,0) {problem data interface};

\draw (97.5,30) rectangle (102.5,70);
\node [rotate=-90] at (100,50) {result interface};

\draw (15,0) circle (2.5);
\node at (15,0) {\texttt{K}};

\draw (40,-5) rectangle (80,-10);
\node at (60,-7.5) {\ilinva-generator};
\draw (40,-15) rectangle (80,-10);
\node at (60,-12.5) {\ilinva-executable};

\draw (5,10) rectangle (55,55);
\node at (20,12.5) {\textbf{Inv. Generator}};
\draw [pattern=north east lines] (40,8.5) rectangle (55,10);
\draw [pattern=north east lines] (55,42.5) rectangle (56.5,52.5);
\draw [pattern=north east lines] (55,27.5) rectangle (56.5,37.5);
\draw [pattern=north east lines] (55,12.5) rectangle (56.5,22.5);

\draw (30,30) rectangle (50,50);
\node at (40,37.5) {\scriptsize abduction};
\node at (40,35) {\scriptsize problem};
\node at (40,32.5) {\scriptsize generator};
\draw [pattern=north east lines] (30,55) rectangle (50,56.5);
\draw (10,20) rectangle (15,50);
\node [rotate=90] at (12.5,35) {\footnotesize strengths. pool};
\draw (20,20) rectangle (25,50);
\node [rotate=90] at (22.5,35) {\footnotesize updated pool};
\draw [pattern=north east lines] (10,55) rectangle (15,56.5);
\draw (30,15) rectangle (50,20);
\node at (40,17.5) {\scriptsize stren. selector};

\draw (65,10) rectangle (95,95);
\node at (80,12.5) {\scriptsize\textbf{\whyml Wrapper}};
\draw [pattern=north east lines] (95,30) rectangle (96.5,70);
\draw [pattern=north east lines] (65,8.5) rectangle (80,10);
\draw [pattern=north east lines] (63.5,77.5) rectangle (65,87.5);
\draw [pattern=north east lines] (63.5,42.5) rectangle (65,52.5);
\draw [pattern=north east lines] (63.5,27.5) rectangle (65,37.5);
\draw [pattern=north east lines] (63.5,12.5) rectangle (65,22.5);

\draw (70,75) rectangle (90,90);
\draw [pattern=north west lines, draw=none] (70,75) rectangle (90,77.5);
\draw [pattern=north west lines, draw=none] (70,87.5) rectangle (90,90);
\draw [pattern=north west lines, draw=none] (70,77.5) rectangle (72.5,87.5);
\draw [pattern=north west lines, draw=none] (87.5,77.5) rectangle (90,87.5);
\draw (72.5,77.5) rectangle (87.5,87.5);
\node at (80,82.5) {\textbf{\whyt}};

\draw (5,70) rectangle (55,95);
\node at (30,82.5) {\textbf{\gpid}};
\draw [pattern=north east lines] (55,77.5) rectangle (56.5,87.5);
\draw [pattern=north east lines] (30,68.5) rectangle (50,70);
\draw [pattern=north east lines] (3.5,77.5) rectangle (5,87.5);

\draw (55, -17.5) node [left] {\texttt{Input File + Opts}} -- (60,-17.5) -- (60,-15);
\draw [dotted] (60,-15) -- (60,-5);
\draw [->,>=latex] (60,-5) -- (60,-2.5);

\draw [->,>=latex] (15,-10) -- (15,-2.5);

\draw [->,>=latex] (72.5,2.5) -- node [rotate=90,above] {\texttt{\scriptsize file}} (72.5,8.5);
\draw [->,>=latex] (47.5,2.5) -- node [rotate=90,below] {\texttt{\scriptsize opts}} (47.5,8.5);

\draw (92.5,50) -- (97.5,50);
\draw [dotted] (97.5,50) -- (102.5,50);
\draw [->,>=latex] (102.5,50) -- (105,50) -- (105,-17.5) -- (95,-17.5) node [left] {\texttt{Output File}};

\draw (56.5,50) -- node [below] {\emph{\tiny queries}} (63.5,50) -- (65,50);
\draw [dotted,->,>=latex] (65,50) -- (77.5,50) -- (77.5,75);
\draw [dotted] (65,45) -- (82.5,45) -- (82.5,75);
\draw [<-,>=latex] (56.5,45) -- (65,45);

\draw [dotted,<-,>=latex] (50,32.5) -- (55,32.5);
\draw (55,32.5) -- node [above] {\emph{\scriptsize VC}} (65,32.5);
\draw [dotted,->,>=latex] (82.5,45) -- (82.5,32.5) -- (80,32.5);
\draw [dotted] (70,32.5) -- (65,32.5);
\node at (75,33) {\scriptsize unproven};
\node at (75,31) {\scriptsize VCs};
\draw [dotted] (50,17.5) -- (55,17.5);
\draw (55,17.5) -- (65,17.5);
\draw [dotted,->,>=latex] (65,17.5) -- node [above] {\emph{\tiny strengthen}} (77.5,17.5) ;

\node at (85,21) {\scriptsize Candidate};
\node at (85,19) {\scriptsize loop};
\node at (85,17) {\scriptsize invariants};

\draw [dotted,->,>=latex] (22.5,20) -- (22.5,17.5) -- (30,17.5);

\draw [dotted,->,>=latex] (15,47.5) -- (20,47.5);
\draw [dotted,->,>=latex] (15,42.5) -- (20,42.5);
\draw [dotted,->,>=latex] (15,27.5) -- (20,27.5);
\draw [dotted,->,>=latex] (15,22.5) -- (20,22.5);
\node [rotate=90] at (17.5,35) {\emph{\scriptsize prune/disjunct}};

\draw [dotted,->,>=latex] (42.5,10) -- node [rotate=90,above] {\emph{\tiny uses}} (42.5,15);
\draw [dotted] (47.5,10) -- (47.5,15);
\draw [loosely dotted] (47.5,15) -- (47.5,20);
\draw [dotted,->,>=latex] (47.5,20) -- node [rotate=90,below] {\emph{\tiny uses}} (47.5,30);

\draw [dotted] (40,50) -- (40,55);
\draw [->,>=latex] (40,55) -- (40,68.5);

\draw [dashed,->,>=latex] (45,56.5) -- (45,60) -- (60,60) -- node [rotate=90,above] {\emph{\tiny configures}} (60,82.5);
\draw [<-,>=latex] (56.5,82.5) -- node [above] {\emph{\tiny $\mathcal{P}_{\text{init}}$}} (63.5,82.5);

\draw (3.5,82.5) -- (2.5,82.5) -- (2.5,60) -- (12.5,60) -- (12.5,55);
\draw [dotted,->,>=latex] (12.5,55) -- (12.5,50);

\draw [loosely dashed,->,>=latex] (15,2.5) -- (15,10);
\draw [loosely dashed,->,>=latex] (15,5) -- (2.5,5) -- (2.5,57.5) -- (7.5,57.5) -- (7.5,70);

\end{tikzpicture}
        \end{center}
        \caption{Workflow graph of the \ilinva tool\label{fig:workflow}}
    \end{figure}
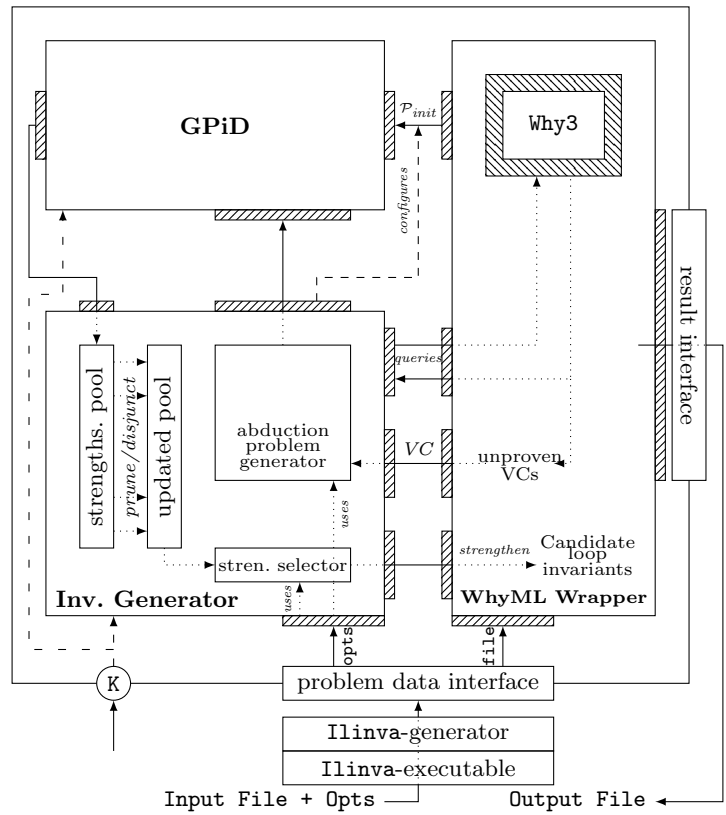
\end{report}

Given an input program written in \whyml, \whyt generates a verification condition the validity of which will ensure that all the asserted properties are verified (including additional conditions related to, \eg, memory safety)
This initial verification condition is split by \whyt into several subtasks. These conditions are enriched with all appropriate information (\eg, theories, axioms,\dots) and sent to external SMT solvers to check satisfiability.
The conditions we are interested in are those linked to the proofs of the program assertions, as well as those ensuring that the candidate loop invariants are inductive.
In our implementation, \why is taken as a black box, and we merely recover the files
that are passed from \whyt to the SMT solvers,  together with additional configuration  data for the solvers we can extract from \whyt.
If the proof obligation fails, then we relate the file to the corresponding assertion in the \whyml  program and extract the set of abducible literals as explained
in Section \ref{sect:gen}, restricting ourselves to symbols 
corresponding to \whyml variables, functions and predicates.
We then  tune the \smtlib file to adapt it for computations by \gpid
and invoke \gpid 
with the same SMT-solver as the one used by \whyt to check satisfiability, as the problem is expected to be  optimized/configured for it.
We also configure \gpid to skip the exploration of subtrees that will produce candidate invariants that do not satisfy the loop preconditions.
\gpid returns a stream of solutions to the abductive reasoning problem. We then backward-translate the formulas into the \whyml formalism and use them to strengthen loop invariants.
For efficiency, the systems run in parallel:  
the generation of abductive hypotheses (by $\gpid$, via the procedure $\Abduce$) 
and their processing in $\whyml$ (via $\Ilinva$) is organized as a pipe-line, where new abduction solutions are computed 
during the processing of the first ones. 

To bridge \Ilinva and \whyt, we had to devise an interface, which
is able to analyze
\whyml programs and to identify loop {\location}s and the corresponding invariants.
It invokes \whyt to generate and prove the associated verification tasks, and it recovers the failed ones.
The library also includes tools to extract and modify loop invariants,
 to extract variables and reference variables in \whyml files, as well as
types, predicates and functions, and wrappers to call the \whyt executable and external tools, and to  extract the files sent by  \whyml to SMT-solvers.

\subsection{Distribution}
The \abdulot framework is available on \github \cite{gpidGithub}.
It contains an revamped interface to the \gpid libraries and algorithm, a generic library of the \ilinva algorithm automatically plugged with \gpid, the code interface for \whyt and the related executables.
\gpid interfaces and related executables are generated for \cvcf, \zthree and \altergo\footnote{Those are the three solvers the \whyt documentation recommends to work with as an initial setup. (see also \url{http://why3.lri.fr/}@External Provers.)} via their \smtlib interface. Note that the SMT solvers are not provided by our framework, they must be installed separately (all versions with an \smtlib-compatible interface are supported). 
Additional interfaces and executables can be produced using \cpp libraries for \minisat, \cvcf and \zthree if their supported version is available\footnote{The \altergo interface provided by the tool uses an \smtlib interface that is under heavy development and that, in practice, does not work well with the examples we send it.}.

The framework also provides libraries and toolbox executables to work with abducible files, \cpp libraries to handle \whyml files,  helpers for the generation of abducible literals out of \smtlib files, and an extensive lisp parser.
It also includes a documentation, which explains in particular how to extend it to other solvers and program verification framework.
All the tools can be compiled using any \cpp 11 compliant compiler.
The whole list of dependencies is available in the documentation, as well as a dependency graph for the different parts of the framework.

\section{Experiments}

We use benchmarks collected from several sources \cite{DBLP:conf/oopsla/DilligDLM13,InvgenTool,NeclaBenchmark,SatConv,Why3Examples0,Why3Examples2,Why3ExamplesF} (see also \cite{gpidGithub} for a more detailed view of the benchmark sources),
 with additional examples corresponding
to standard algorithms for division and exponentiation (involving lists, arrays, 
and non linear arithmetic). 
Some of these benchmarks have been  translated\footnote{The translation was done by hand.} from \ccode or \java into \whyml.
In all cases, the initial invariant associated with each loop is $\true$.
We used \zthree for the benchmarks containing real arithmetic, \altergo for lists and arrays and \cvcf in all the other cases.
All examples are embedded with the source of the \ilinva tool.

\subsection{Results}

We ran \ilinva on each example, first without disjunctive invariants (\ie, taking $n = 1$ in Procedure \Abduce) then 
with disjunctions of size $2$. The results are reported in Figure \ref{fig:results}.
For each example, we report whether our tool was able generate invariants allowing \whyt to check the soundness 
of all program assertions before the timeout, in which case we also report the time \ilinva took to do so (columns \texttt{T(C)} when generating conjunctions only and \texttt{T(D)} when generating implicants containing disjunctions).
We also report the number of candidate invariants that have been tried (columns \texttt{C(D)} and \texttt{C(D)}) and the number of abducible literals that were sent to the \gpid algorithm (column \texttt{Abd}). Note that the number of candidate invariants does not correspond to the number of SMT calls that are made by the system: those made by \gpid to generate these candidates are not taken into account.
The timeout is set to $20$ min.
For some of the examples that we deemed interesting, we allowed the algorithm to run longer. We report those cases by putting the results between parentheses.
Light gray cells indicate that the program terminates before the timeout without returning any solution, and dark gray cells indicate that the timeout was reached. Empty cells mean that the tool could not generate any candidate invariant.
The last column of both tables report the time \whyt takes to prove all the assertions of an example when correct invariants are provided.

The tests were performed on a computer powered by a dual-core Intel i5 processor running at 1.3GHz with 4 GB of RAM, under macOS 10.14.3.
We used \whyt version 1.2.0 and the SMT solvers \altergo (version 2.2.0), \cvcf (prerelease of version 1.7) and \zthree (version 4.7.1).

An essential point concerns the handling of local solver timeouts.
Indeed, most calls to the SMT solver in the abductive reasoning procedure will involve satisfiable formulas, and 
the  solvers usually take a lot of time to terminate on such formulas (or in the worst case will not terminate at all if the theory is not decidable, \eg, for problems involving first-order axioms).
We thus need to set a timeout after which a call will be considered as satisfiable (see Section \ref{sect:abd}).
Obviously, we neither want this timeout to be too high as it can significantly increase computation time, nor too low, since it could make us miss solutions.
We decided to set this timeout to $1$ second, independently of the solver used, after measuring the computation time of the \whyt verification conditions already satisfied (for which the solver returns \texttt{unsat}) across all benchmarks. We worked under the assumption that the computation time required to prove the other verification conditions when possible would be roughly similar. \label{sect:smtto}

\begin{figure}
\begin{minipage}[t]{0.5\textwidth}
\begin{center}
{\scriptsize
\begin{tabular}{|c|c|c|c|c|c|c|}
    \hline
    \rtnoth{} & \texttt{Abd} & \texttt{T(C)} & \texttt{C(C)} & \texttt{T(D)} & \texttt{C(D)} & \texttt{\whyt}\\
    \hline
    \texttt{O01} & \rtasol{36} & \rtasol{9.68} & \rtasol{7} & \rtasol{11.89} & \rtasol{10} & \rtasol{0.26}\\
    \hline
    \texttt{O02} & \rtasol{536} & \rtnosol{3'18.9} & \rtnosol{66} & \rtto{} & \rtto{1126} & \rtasol{0.45}\\
    \hline
    \texttt{O04} & \rtasol{108} & \rtnosol{50.47} & \rtnosol{32} & \rtasol{2'31.4} & \rtasol{156} & \rtasol{0.26}\\
    \hline
    \texttt{O05} & \rtasol{266} & \rtasol{1'9.07} & \rtasol{5} & \rtasol{1'3.2} & \rtasol{5} & \rtasol{0.31}\\
    \hline
    \texttt{O06} & \rtasol{390} & \rtnosol{6'13.6} & \rtnosol{56} & \rtnosol{18'5.1} & \rtnosol{552} & \rtasol{0.72}\\
    \hline
    \texttt{O07} & \rtasol{594} & \rtnosol{1'50.1} & \rtnosol{13} & \rtnosol{15'40.6} & \rtnosol{355} & \rtasol{0.38}\\
    \hline
    \texttt{O08} & \rtasol{210} & \rtasol{2'35.5} & \rtasol{61} & \rtasol{9'35.8} & \rtasol{528} & \rtasol{0.42}\\
    \hline
    \texttt{O09} & \rtasol{390} & \rtnosol{} & \rtnosol{0} & \rtnosol{} & \rtnosol{0} & \rtasol{0.56}\\
    \hline
    \texttt{O10} & \rtasol{90} & \rtasol{1'39.54} & \rtasol{65} & \rtnosol{12'56.9} & \rtnosol{0} & \rtasol{0.35}\\
    \hline
    \texttt{O11} & \rtasol{180} & \rtnosol{2'17.7} & \rtnosol{63} & \rtto{} & \rtto{942} & \rtasol{0.26}\\
    \hline
    \texttt{O12} & \rtasol{782} & \rtnosol{} & \rtnosol{0} & \rtnosol{} & \rtnosol{0} & \rtasol{0.53}\\
    \hline
    \texttt{O13} & \rtasol{296} & \rtnosol{2'4.5} & \rtnosol{0} & \rtto{} & \rtto{1621} & \rtasol{0.30}\\
    \hline
    \texttt{O14} & \rtasol{270} & \rtnosol{} & \rtnosol{0} & \rtnosol{} & \rtnosol{0} & \rtasol{0.34}\\
    \hline
    \texttt{O15} & \rtasol{36} & \rtnosol{32.53} & \rtnosol{21} & \rtto{} & \rtto{888} & \rtasol{0.27}\\
    \hline
    \texttt{O16} & \rtasol{60} & \rtnosol{12.54} & \rtnosol{8} & \rtasol{29.72} & \rtasol{32} & \rtasol{0.26}\\
    \hline
    \texttt{O17} & \rtasol{36} & \rtnosol{40.88} & \rtnosol{26} & \rtnosol{2'42.5} & \rtnosol{134} & \rtasol{0.33}\\
    \hline
    \texttt{O18} & \rtasol{38} & \rtnosol{58.49} & \rtnosol{38} & \rtnosol{6'53.3} & \rtnosol{0} & \rtasol{0.30}\\
    \hline
    \texttt{O19} & \rtasol{60} & \rtnosol{1'59.5} & \rtnosol{111} & \rtto{} & \rtto{1620} & \rtasol{0.31}\\
    \hline
    \texttt{O20} & \rtasol{546} & \rtto{} & \rtto{380} & \rtto{} & \rtto{870} & \rtasol{0.49}\\
    \hline
    \texttt{O21} & \rtasol{90} & \rtnosol{0.76} & \rtnosol{0} & \rtnosol{0.76} & \rtnosol{0} & \rtasol{0.38}\\
    \hline
    \texttt{O22} & \rtasol{270} & \rtasol{2'10.1} & \rtasol{20} & \rtasol{2'11.9} & \rtasol{20} & \rtasol{0.48}\\
    \hline
    \texttt{O23} & \rtasol{36} & \rtasol{4.6} & \rtasol{5} & \rtasol{4.7} & \rtasol{5} & \rtasol{0.28}\\
    \hline
    \texttt{O25} & \rtasol{60} & \rtnosol{1'23.4} & \rtnosol{20} & \rtnosol{2'38.4} & \rtnosol{44} & \rtasol{0.39}\\
    \hline
    \texttt{O26} & \rtasol{396} & \rtnosol{6'23.2} & \rtnosol{21} & \rtnosol{7'13.9} & \rtnosol{66} & \rtasol{0.83}\\
    \hline
    \texttt{O28} & \rtnoth{} & \rtnosol{2'3.9} & \rtnosol{137} & \rtasol{16'22.8} & \rtasol{1331} & \rtasol{0.31}\\
    \hline
    \texttt{O29} & \rtasol{61776} & \rtnosol{} & \rtnosol{0} & \rtnosol{} & \rtnosol{0} & \rtasol{0.65}\\
    \hline
    \texttt{O30} & \rtasol{36} & \rtasol{31.43} & \rtasol{26} & \rtasol{41.66} & \rtasol{45} & \rtasol{0.26}\\
    \hline
    \texttt{O31} & \rtasol{67050} & \rtnosol{} & \rtnosol{0} & \rtnosol{} & \rtnosol{0} & \rtasol{0.49}\\
    \hline
    \texttt{O32} & \rtasol{40} & \rtnosol{0.865} & \rtnosol{0} & \rtnosol{0.833} & \rtnosol{0} & \rtasol{0.50}\\
    \hline
    \texttt{O33} & \rtasol{90} & \rtnosol{1'11.3} & \rtnosol{12} & \rtnosol{1'19.9} & \rtnosol{21} & \rtasol{0.45}\\
    \hline
    \texttt{O34} & \rtasol{6768} & \rtnosol{0.798} & \rtnosol{0} & \rtnosol{0.79} & \rtnosol{0} & \rtasol{0.44}\\
    \hline
    \texttt{O35} & \rtasol{18} & \rtnosol{18.42} & \rtnosol{25} & \rtasol{2'7.9} & \rtasol{200} & \rtasol{0.26}\\
    \hline
    \texttt{O36} & \rtasol{61778} & \rtnosol{} & \rtnosol{0} & \rtnosol{} & \rtnosol{0} & \rtasol{1.09}\\
    \hline
    \texttt{O37} & \rtasol{36} & \rtnosol{0.752} & \rtnosol{0} & \rtnosol{0.769} & \rtnosol{0} & \rtasol{0.34}\\
    \hline
    \texttt{O38} & \rtasol{630} & \rtto{} & \rtto{444} & \rtnosol{3'54.4} & \rtnosol{0} & \rtasol{0.48}\\
    \hline
    \texttt{O39} & \rtasol{546} & \rtto{} & \rtto{1581} & \rtto{} & \rtto{1840} & \rtasol{0.40}\\
    \hline
    \texttt{O40} & \rtasol{272} & \rtnosol{} & \rtnosol{0} & \rtnosol{} & \rtnosol{0} & \rtasol{0.84}\\
    \hline
    \texttt{O41} & \rtnosol{} & \rtnosol{} & \rtnosol{0} & \rtnosol{} & \rtnosol{0} & \rtasol{0.37}\\
    \hline
    \texttt{O42} & \rtasol{271} & \rtnosol{1'50.4} & \rtnosol{25} & \rtto{} & \rtto{605} & \rtasol{1.12}\\
    \hline
    \texttt{O43} & \rtasol{60} & \rtasol{4.27} & \rtasol{2} & \rtasol{3.67} & \rtasol{2} & \rtasol{0.29}\\
    \hline
    \texttt{O44} & \rtnoth{} & \rtnosol{22.481} & \rtnosol{13} & \rtasol{5'7.8} & \rtasol{290} & \rtasol{0.35}\\
    \hline
    \texttt{O45} & \rtnoth{} & \rtnosol{} & \rtnosol{0} & \rtnosol{} & \rtnosol{0} & \rtasol{1.50}\\
    \hline
    \texttt{O46} & \rtnoth{} & \rtto{} & \rtto{513} & \rtto{} & \rtto{813} & \rtasol{0.61}\\
    \hline
\end{tabular}
}
\end{center}
\end{minipage}
\begin{minipage}[t]{0.5\textwidth}
\begin{center}
{\scriptsize
\begin{tabular}{|c|c|c|c|c|c|c|}
    \hline
    \rtnoth{} & \texttt{Abd} & \texttt{T(C)} & \texttt{C(C)} & \texttt{T(D)} & \texttt{C(D)} & \texttt{\whyt}\\
    \hline
    \texttt{509} & \rtasol{130} & \rtasol{(1h50')} & \rtasol{(95)} & \rtto{} & \rtto{0} & \rtasol{0.66}\\
    \hline
    \texttt{534} & \rtasol{172k} & \rtto{} & \rtto{8} & \rtto{} & \rtto{0} & \rtasol{0.62}\\
    \hline
    \texttt{H04} & \rtasol{120} & \rtasol{2'54.8} & \rtasol{223} & \rtto{} & \rtto{1383} & \rtasol{0.31}\\
    \hline
    \texttt{H05} & \rtasol{1260} & \rtnosol{} & \rtnosol{0} & \rtnosol{} & \rtnosol{0} & \rtasol{0.37}\\
    \hline
    \texttt{list0} & \rtasol{60} & \rtasol{6'30.4} & \rtasol{77} & \rtto{} & \rtto{1722} & \rtasol{0.40}\\
    \hline
    \texttt{list1} & \rtasol{20} & \rtasol{40.82} & \rtasol{3} & \rtasol{3'26.2} & \rtasol{385} & \rtasol{0.47}\\
    \hline
    \texttt{list2} & \rtasol{720} & \rtto{} & \rtto{40} & \rtto{} & \rtto{0} & \rtasol{0.40}\\
    \hline
    \texttt{list3} & \rtasol{126} & \rtnosol{3'35.1} & \rtnosol{11} & \rtto{} & \rtto{930} & \rtasol{0.44}\\
    \hline
    \texttt{list4} & \rtasol{816} & \rtto{} & \rtto{18} & \rtto{} & \rtto{0} & \rtnosol{}\\
    \hline
    \texttt{list5} & \rtasol{468} & \rtto{} & \rtto{22} & \rtto{} & \rtto{0} & \rtasol{0.44}\\
    \hline
    \texttt{array0} & \rtnosol{} & \rtnosol{} & \rtnosol{0} & \rtnosol{} & \rtnosol{0} & \rtasol{0.72}\\
    \hline
    \texttt{array1} & \rtnosol{} & \rtnosol{} & \rtnosol{0} & \rtnosol{} & \rtnosol{0} & \rtasol{0.50}\\
    \hline
    \texttt{array2} & \rtnosol{} & \rtnosol{} & \rtnosol{0} & \rtnosol{} & \rtnosol{0} & \rtasol{0.50}\\
    \hline
    \texttt{array3} & \rtnosol{} & \rtnosol{} & \rtnosol{0} & \rtnosol{} & \rtnosol{0} & \rtasol{0.82}\\
    \hline
    \texttt{expo0} & \rtasol{171} & \rtasol{(6h36')} & \rtasol{(9)} & \rtto{} & \rtto{0} & \rtasol{0.40}\\
    \hline
    \texttt{expo1} & \rtasol{2130} & \rtto{} & \rtto{0} & \rtto{} & \rtto{0} & \rtnosol{}\\
    \hline
    \texttt{square} & \rtasol{705} & \rtto{} & \rtto{62} & \rtto{} & \rtto{148} & \rtnosol{}\\
    \hline
    \texttt{real0} & \rtasol{965} & \rtto{} & \rtto{81} & \rtto{} & \rtto{213} & \rtasol{0.55}\\
    \hline
    \texttt{real1} & \rtasol{965} & \rtto{} & \rtto{73} & \rtto{} & \rtto{115} & \rtasol{0.55}\\
    \hline
    \texttt{real2} & \rtasol{240} & \rtto{} & \rtto{9} & \rtto{} & \rtto{2} & \rtasol{0.40}\\
    \hline
    \texttt{realO0} & \rtasol{36} & \rtasol{4'9.6} & \rtasol{25} & \rtasol{5'32.18} & \rtasol{40} & \rtasol{0.47}\\
    \hline
    \texttt{realS} & \rtasol{66} & \rtasol{1'5.3} & \rtasol{5} & \rtasol{1'0.1} & \rtasol{5} & \rtasol{0.33}\\
    \hline
    \texttt{real3} & \rtasol{17460} & \rtto{} & \rtto{0} & \rtto{} & \rtto{0} & \rtnosol{}\\
    \hline
    \texttt{BM} & \rtasol{1260} & \rtasol{3'15.2} & \rtasol{74} & \rtto{} & \rtto{33} & \rtasol{3.35}\\
    \hline
    \texttt{Scmp} & \rtnosol{} & \rtnosol{} & \rtnosol{0} & \rtnosol{} & \rtnosol{0} & \rtasol{0.83}\\
    \hline
    \texttt{Dmd} & \rtasol{42} & \rtto{} & \rtto{6} & \rtto{} & \rtto{0} & \rtasol{1'44.9}\\
    \hline
    \texttt{B00} & \rtasol{639k} & \rtto{} & \rtto{0} & \rtto{} & \rtto{0} & \rtasol{0.76}\\
    \hline
    \texttt{DIV0} & \rtasol{560} & \rtnosol{3'58} & \rtnosol{35} & \rtto{} & \rtto{534} & \rtasol{0.83}\\
    \hline
    \texttt{DIV1} & \rtasol{310} & \rtasol{14.6} & \rtasol{19} & \rtasol{14.6} & \rtasol{19} & \rtasol{0.44}\\
    \hline
    \texttt{DIVE} & \rtasol{42250} & \rtto{} & \rtto{0} & \rtto{} & \rtto{0} & \rtnosol{}\\
    \hline
\end{tabular}
}
%\captionof{table}{Results of the \ilinva algorithm on examples \label{table:ilinvares}}
\end{center}
\end{minipage}
\caption{Experimental Results\label{fig:results}}
\end{figure}

\subsection{Discussion}

\label{sect:discuss}

As  can be observed, \Ilinva is able to generate solutions for a wide range of theories, although the execution 
time is usually high.
The number of invariant candidates is relatively  high, which has a major impact on the efficiency and scalability 
of the approach.

When applied to examples involving arithmetic invariants, the program is rather slow, compared to 
the approach based on minimization and quantifier elimination \cite{DBLP:conf/oopsla/DilligDLM13}. 
This is not surprising, since 
it is very unlikely that  a purely generic 
approach based on a model-based tree exploration algorithm involving many calls to an SMT solver can possibly compete with a more specific procedure exploiting particular properties of the 
considered theory. We also wish to emphasize that the fact that our framework is based on an external program verification system (which itself calls external solvers) involves a significant overcost compared to a more integrated approach: for instance, for the {\tt Oxy} examples (taken from \cite{DBLP:conf/oopsla/DilligDLM13}), the time used by \whyt to check the verification conditions once the correct invariants have been generated is often greater than the total time reported in \cite{DBLP:conf/oopsla/DilligDLM13} for computing the invariants 
and checking all properties. Of course, our choice also has clear advantages in terms of genericity, generality and  evolvability.

When applied to theories that are harder for SMT solvers, the algorithm can still generate satisfying invariants.
However, due to the high number of candidates it tries, combined with the heavy time cost of a verification (which can be several seconds), it may take some time to do so.

The number of abducible literals has a strong impact on the efficiency of the process, leading to timeouts
when the considered program contains many variables or function/predicate symbols.
 It can be observed that the abduction depth is rather low in all examples ($1$ or $2$).

Our prototype has some technical limitations that have a significant impact on the time cost of the execution.
For instance, we use  \smtlib files for communication between \gpid and \cvcf or \zthree, instead of using the available APIs. 
We  went back to this solution, which is clearly not optimal for performance, because we experienced  many problems
coping with the numerous changes in the specifications when updating the solvers to always use the latest versions. 
The fact that \whyt is taken as a black box also yields some time consumption, first in 
the (backward and forward) translations (\eg, to associate program variables to their logical counterparts), but also in the verification tasks, which have to be rechecked from the beginning each time an invariant is updated.
Our aim in the present paper was not to devise an efficient system, but rather to assess the feasability and usefulness of this approach.
Still, the cost of the numerous calls to the SMT solvers and the size of the search tree of the abduction procedure remain the bottleneck of the approach, especially for hard theories (\eg, non-linear arithmetics) for which most calls with satisfiable formulas yield to a local timeout (see Section \ref{sect:smtto}).

\section{Conclusion and Future Work}

By combining our  generic system \gpid for abductive reasoning modulo theories with the \why platform to generate verification conditions, we obtained a tool to check properties of \whyml  programs, which is able to 
compute loop invariants in a purely automated way.

The main drawback of our approach is that the set of possible abducible literals is large, yielding 
a huge search space, especially if disjunctions of {\implicant}s are considered. 
Therefore, we believe that our system  in its current state 
is mainly useful when used in an interactive way. For instance, the user could 
provide the properties of interest for some of the loops and let the system automatically 
compute suitable invariants by combining these properties, or the program could rely on the user to choose between 
different solutions to the abduction problem before applying the strengthening.
Beside, it is also useful 
for dealing with theories for which no specific abductive reasoning procedure exists, especially
for reasoning in the presence of user-defined symbols or axioms.

In the future, we will focus on the definition of suitable heuristics for  automatically  selecting
abducible literals and ordering them, to reduce the search space and 
avoid backtracking. The number of occurrences of symbols should be taken into account, 
as well as properties propagating from previous invariant strengthening.
A promising approach is to use dynamic program analysis tools to select relevant abducibles.
It would also be interesting to adapt the \gpid algorithm to explore the search space width-first, to ensure that simplest solutions are always generated first. Another option is to give \Ilinva a more precise control on the \gpid algorithm, \eg, to explore some branches more deeply, based on  information related to the verification conditions. 
\gpid could also be tuned to generate disjunctions of solutions in a more efficient way. 

From a more technical point of view, a tighter integration with the \why platform would certainly be beneficial, as explained in Section \ref{sect:discuss}.
The framework could be extended to handle procedures and functions (with pre- and -post conditions).

 A current limitation of our tool is that it cannot handle problems in which \whyt relies on a combination of different solvers to check the desired properties. In this case, \Ilinva cannot generate the invariants, 
as the same SMT solver is used for each abduction problem (trying all solvers in parallel on every problem would be  possible in theory but this would greatly 
increase the search space).  This problem could be overcome by using heuristic approaches 
to select the most suited solver for a given problem.

From a theoretical point of view, it would be interesting to investigate the completeness of our approach.
It is clear that no general completeness result  possibly holds, due to usual theoretical limitations, 
however, if we assume that a program $\aprog' \sameprog \aprog$ such that 
$\vcgen{\aprog'}$ is valid exists, does the call $\Ilinva(\aprog)$ always succeed?
This of course would require that the invariants in $\aprog'$ can be constructed from abducibles occurring in the set returned by the procedure $\extractabd$.

\bibliography{frocos}
\bibliographystyle{abbrv}

\end{document}